\newtheorem{thm}{Theorem}
\newtheorem{corollary}{Corollary}
\newtheorem{proposition}{Proposition}
\newtheorem{definition}{Definition}
\newcommand{\baseSystem}{\mathfrak{s}}
\newcommand{\sfs}{\mathfrak{f}}
\newcommand{\SFS}{\mathfrak{F}}
\newcommand{\SFSRS}{\mathfrak{R}}
\newcommand{\Ift}{I_{\texttt{ft}}}
\newcommand{\Imin}{I_{\texttt{min}}}
\newcommand{\Ibroja}{I_{\texttt{BROJA}}}
\DeclareMathOperator{\ind}{ind}
\newcommand{\Ired}{I_{\cap}}
\tikzstyle{pose} = [circle, text centered, draw=black, fill=orange!30, minimum size = 1.25cm]
\tikzstyle{spur} = [circle, text centered, draw=black, fill=violet!30, minimum size = 1.25cm]
\tikzstyle{factor} = [rectangle, text centered, draw=black, fill=blue!30]
\tikzstyle{rendezFactor} = [rectangle, text centered, draw=black, fill=violet!30]
\tikzstyle{rendezFactorEliminated} = [rectangle, text centered, draw=black!40, text = black!40, fill=violet!10]
\tikzstyle{factorNew} = [rectangle, text centered, draw=black, fill=green!30]
\tikzstyle{factorNewEliminated} = [rectangle, text centered, draw=black!40, text=black!40, fill=green!10]
\tikzstyle{poseEliminated} = [circle, text centered, draw=black!40, text=black!40, fill=orange!10]
\tikzstyle{spurEliminated} = [circle, text centered, draw=black!40, text=black!40, fill=violet!10, minimum size = 1.25cm]
\tikzstyle{factorEliminated} = [rectangle, text centered, draw=black!40, text=black!40, fill=blue!10]
\tikzstyle{intra} = = [thick,->,>=stealth]
\tikzstyle{inter} = = [dashed,-, very thick,draw=blue]
\tikzstyle{factorLine} = = [thick,-]
\tikzstyle{factorLineEliminated} = = [thick,-,draw=black!40]
\tikzstyle{factorLineNew} == [thick, green, -]
\tikzstyle{BNarrow} = = [dashed, blue, ->]
\tikzstyle{interrobotComms} = [dashed,->,draw=violet,thick]
\newcounter{assumption}[section]\setcounter{assumption}{0}
\renewcommand{\theassumption}{\arabic{section}.\arabic{assumption}}
\newcounter{metric}[section]\setcounter{assumption}{0}
\renewcommand{\themetric}{\arabic{section}.\arabic{metric}}
\begin{document}
\title{Measuring the Redundancy of Information from a Source Failure Perspective} 


\author{%
  \IEEEauthorblockN{Jesse Milzman}
  \IEEEauthorblockA{DEVCOM Army Research Laboratory\\
                    Adelphi, MD, USA\\
                    Email: jesse.m.milzman.civ@army.mil}
}


\maketitle

\begin{abstract}
In this paper, we define a new measure of the redundancy of information from a fault tolerance perspective.
The partial information decomposition (PID) emerged last decade as a framework for decomposing the multi-source mutual information $I(T;X_1, ..., X_n)$ into atoms of redundant, synergistic, and unique information.
It built upon the notion of redundancy/synergy from McGill's interaction information 
\cite{mcgill1954}.
Separately, the redundancy of system components has served as a principle of fault tolerant engineering, for sensing, routing, and control applications.
Here, redundancy is understood as the level of duplication necessary for the fault tolerant performance of a system.
With these two perspectives in mind, we propose a new PID-based measure of redundancy $\Ift$, based upon the presupposition that redundant information is robust to individual source failures.
We demonstrate that this new measure satisfies the common PID axioms from \cite{williams2010nonnegative}.
In order to do so, we establish an order-reversing correspondence between collections of source-fallible instantiations of a system, on the one hand, and the PID lattice from \cite{williams2010nonnegative}, on the other.
\end{abstract}

\begin{IEEEkeywords}
redundancy, partial information decomposition,  fault tolerance, interaction information
\end{IEEEkeywords}

\section{Introduction}
There are two notions of redundancy explored in this paper.
Within the fault tolerance literature, redundancy describes the desirable property of a system to be able to replicate its component services in the event of a fault \cite{rullo2019redundancy}.
This redundancy is typically implemented through physical or logical copies of the system's fallible components.
Fault tolerance describes the system's ability to continue to operate despite faults.
Classic results from the field prescribe the minimal level of redundancy needed to withstand a given number of faults, for a given task and fault type.
For instance, one classic result considers $n$ independent sensors measuring the same continuous value within a specified interval \cite{marzullo1990tolerating}.
Depending on the nature of the faults, it was determined that the centralized algorithm provided could tolerate up to $f={(n-1)/k}$ faults, where $k=1,2,3$ for different fault types.
Besides sensing, similar results exist for routing and control \cite{rullo2019redundancy}.
Redundancy, in the form of duplicated information assets, provides gaurantees on system performance up to a certain level of failure.

Separately, there is a recurrent interest in information theory to measure the redundant or common information content among multiple variables.
Both total correlation \cite{Watanabe1960} and interaction information \cite{mcgill1954}, for instance, are different extensions to mutual information, and can both be considered as measures of redundancy or common randomness among many variables.
When one of the variables is designated as a target, this question of redundancy can be interpreted as asking for the interactions among multiple sources regarding the target (see McGill's original \cite{mcgill1954}).
For three variables $X,Y,Z$, interaction information takes the form:
\begin{equation}
    \label{eq:intxn_info}
    I(X ; Y ; Z) = I(X : Y | Z) - I(X; Y)
\end{equation}
The signed nature of the interaction suggests a rich structure: a positive interaction indicates synergy, while a negative one indicates redundancy among the variables.
Interaction information found favor in the theoretical neuroscience community in the 90s/00s, both to measure synergy \cite{Schneidman11539,timme2014synergy} and redundancy\cite{chechik2001group} (see also \cite{timme2014synergy}).

The partial information decomposition (PID) was introduced in the last decade to separate redundancy and synergy as distinct informational quantities \cite{williams2010nonnegative}.
Dispensing with the target-source symmetry of interaction information, PID seeks to decompose the information between a collection of source variables $X_1, X_2, ..., X_n$ and a target $T$ into those components that can be uniquely, redundantly, and/or synergistically attributed to each source.
For instance, when $n=2$, the so-called bivariate PID decomposes $I(T;X_1,X_2)$ into four information atoms $R+S+U_1+U_2$, satisfying:
\begin{subequations}
\label{eqs:bivariate_PID}
\begin{gather}
    I(T;X_1) = R + U_1 \\
    I(T;X_2) = R + U_2 \\
    I(T;X_1,X_2) = R +  U_1 + U_2 + S
\end{gather}    
\end{subequations}
These equations are known as the PID
equations, and these atoms are referred to as redundant ($R$), unique ($U_1$, $U_2$), and synergistic ($S$) information.
By applying the chain rule for mutual information, from (\ref{eq:intxn_info}) and (\ref{eqs:bivariate_PID}a-c) we have that
\begin{equation}
    \label{eq:intxn_info_PID}
    I(T; X_1 ; X_2) = S - R
\end{equation}
and thus recover the traditional interpretation of interaction information as measuring a synergy/redundancy trade-off.
As (\ref{eq:intxn_info_PID}) is an underdetermined system, one of the atoms needs to be specified in order to provide a proper PID definition.
Many different PIDs have been proposed, satisfying different, often mutually exclusive axioms \cite{williams2010nonnegative, broja2014,finn2018, lizier2018information}.

PID has drawn great interest from researchers at the intersection of neuroscience and complex systems \cite{lizier2018information}.
However, its application outside of biological and social science is less common, largely limited to theoretical machine learning \cite{ehrlich2023measure,liang2023quantifying}.
To our knowledge, there is no technical work examining what PID might be able to suggest about an information system's robustness from a fault tolerance perspective.
This work addresses itself to that question.

In this paper, we will develop a novel PID redundancy function $\Ift$ that quantifies the minimal average information in the presence of crash faults --- i.e. faults in which a source unambiguously fails and provides a null signal.
We begin by presenting a simple distribution in Section~\ref{sec:example} that demonstrates the type of redundancy we would like to capture.
In Sec.~\ref{sec:PID}, we review the fundamentals of the partial information decomposition, and introduce relevant notation and terminology.
In Sec.~\ref{sec:Ift}, we introduce our fault tolerance-inspired PID redudndancy function $\Ift$, and examine its mathematical properties.
In the course of demonstrating that $\Ift$ satisfies the common PID axioms from \cite{williams2010nonnegative,broja2014},
we show an insightful result: there is a order-reversing correspondence between the collections of sources on the PID lattice, over which PID redundancy is measured, and collections of source-fallible realizations of the system --- i.e. garbled, fault-prone copies of $X_1, ..., X_n$.

\section{Motivating Example}
\label{sec:example}

Consider a scenario in which we need to know a one-bit variable $T$ of great importance --- say, who is entering a building tonight, an invited or an uninvited guest.
The building has two entrances they may enter from: a front and a back.
In ideal circumstances, both are monitored by a security camera, and so we have source variables $X_1$ and $X_2$ for the front and back cameras, respectively.
We code the meanings derived from each video feed using the alphabet $\mathcal{A}_{X_i} = \{ 1, 2, 3\}$, where $1$ means that we observe no activity at that entrance, while $2$ and $3$ signify the invited and uninvited guest entering, respectively.
Our target variable only takes values $2$ and $3$, since we have prior information that some visitor will be entering the building.
Assuming our observation of the entrances is perfect and the visitor cannot use both entrances, the distribution for the system $(X_1,X_2,T)$ is given by Table~(\ref{table:leading_example}A).

It is clear that, in this ideal scenario, $I(T;X_1,X_2) = H(X_1) = 1$, i.e. $(X_1,X_2)$ contain all the information about $T$.
Moreover, each source $X_i$ gives this full bit of information half of the time, i.e. $I(T; X_i | X_i \neq 1) = 1$.
However, when $X_i=1$, we have gain no information from it, as $H(T | X_i=1) = H(T) = 1$.
Thus each pairwise MI is given as $I(T;X_i) = 1/2$.

\begin{table}[h]
\centering
{\Large \textbf{A}}
\vspace{6pt}

\begin{tabular}{ |p{1cm}|p{1cm}|p{1cm}||p{1cm}|  }
 \hline
    $X_1$ & $X_2$ & $T$ & $p$ \\
 \hline
 1 & 2 & 2 & $1/4$ \\
 1 & 3 & 3 & $1/4$ \\
 2 & 1 & 2 & $1/4$ \\
 3 & 1 & 3 & $1/4$ \\
 \hline
\end{tabular}

\vspace{12pt}

{\Large \textbf{B}}
\vspace{6pt}

\begin{tabular}{ |p{1cm}|p{1cm}|p{1cm}||p{1cm}|  }
 \hline
 & & &
 \\[-6pt]
    $\tilde{X}_1$ &  $\tilde{X}_2$ & $T$ & $p$ \\
 \hline
 1 & 0 & 2 & $1/4$ \\
 1 & 0 & 3 & $1/4$ \\
 0 & 1 & 2 & $1/4$ \\
 0 & 1 & 3 & $1/4$ \\
 \hline
\end{tabular}

\vspace{12pt}

\caption{
(\textbf{A}) Our leading example, a trivariate, $n=2$ system with a 1-bit target. As can be seen, each predictor provides 1 bit of information half the time.
(\textbf{B}) A garbled, source-fallible copy of the same system, where the fallible sources provide no information about $T$.
This demonstrates the kind of circumstance that fault tolerant engineering aims to control for. 
}
\label{table:leading_example}
\end{table}

How, then, should we decompose this bit of mutual information into $R+U_1+U_2+S$, the bivariate PID of this system (\ref{eqs:bivariate_PID}a-c)?
Let's assume that we want each atom to be non-negative, which would exclude several recent PIDs (e.g. \cite{finn2018,ince2017measuring,makkeh2021introducing}).
Since the bivariate PID has one degree of freedom, we may consider the edge cases, corresponding to $R=1/2$ and $R=0$, respectively.

First, we might have $R=S=1/2$, which would in turn imply $U_1=U_2=0$.
This decomposition would indicate that, on average, half a bit of information is redundant between the variables, and half a bit is synergistic.
This is the PID that would be given by both the original $\Imin$ redundancy-based PID from \cite{williams2010nonnegative} and the popular $\Ibroja$ PID from \cite{broja2014}.
However, does this decomposition make sense?
In every realization $(x_1,y_1,t)$, only one of our two predictors is giving any pointwise information regarding the target, while the other is giving none.
For instance, for the first two rows of Table~\ref{table:leading_example}A,
\begin{align}
    \label{eq:example.pwMI.1}
    \log \frac{p(x_1,t)}{p(x_1)p(t)} &= 0\\
    \label{eq:example.pwMI.2}
    \log \frac{p(x_2,t)}{p(x_2)p(t)} &= 1
\end{align}
Thus, it is unclear in what sense there is redundancy (or synergy) between the information provided pointwise by these outcomes --- at least if we take redundancy to mean information that can be provided reliably by either variable in the event that the other source fails.

Let us instead consider the other option, setting $R=S=0$, which would in turn give us $U_1=U_2=1/2$.
This tells us that the information provided by $(X_1,X_2)$ about $T$ can be divided into unique contributions from each, and that none of the information can be considered redundantly provided by both, or synergistically provided by their combination.
One possible intuitive justification for this allocation could be that the supports of the pointwise informations given by the left-hand sides of Eqs~(\ref{eq:example.pwMI.1})-(\ref{eq:example.pwMI.2}) are mutually exclusive: our sample space is partitionable between the events ``only $X_1$ is informative'' and ``only $X_2$ is informative.''

We favor this latter assignment.
Taking our inspiration from the notion of redundancy as used in the fault tolerance literature, we conceptualize `redundant information' as the expected bits that one can guarantee if at least one of the sources is available in every realization --- or, equivalently, if we allow that all but one source may fail arbitrarily.
With respect to this example, let us consider the garbled source tuple $(\tilde{X}_1, \tilde{X}_2)$, defined by $\tilde{X}_i = \delta_{1}(X_i) \, X_i$.
This new distribution is given by Table~\ref{table:leading_example}B.
We can imagine this distribution as the worst-case failure mode in every instance, while still guaranteeing that one source will be available.
Any decision-maker who has access to $\tilde{X}_1,\tilde{X}_2$ always has access to the ground-truth of either $X_1$ or $X_2$, and thus, we argue, the information redundant to both of them.
As we see, $I(T; \tilde{X}_1,\tilde{X}_2) = 0$.
Thus, it stands to reason that $R=0$.

Our proposed redundancy function $\Ift$ will be defined using such garblings, which we think of as source-fallible instantiations of the underlying ground-truth system.
First, we review the partial information decomposition for the general case.

\section{Partial Information Decomposition}
\label{sec:PID}

PID concerns the system of random variables composed of a target $T$ and a collection of source or predictor variables
$\bm{X} = (X_i)_{i=1}^n$.
In this work we consider all variables, including the target $T$ and predictors $X_i$, to be discrete, with finite alphabets $\mathcal{A}_T, \mathcal{A}_{X_i} \subset \mathbb{N}$.
Note that we exclude `0' from these alphabets, as we will be reserving it for source failure later.
This collection of variables forms the base system that represents the `ground truth' of both sensor and target variables, assuming no failures.
We refer to this as the base system, to distinguish it from source-fallible systems later.

\begin{definition}[Base System]
    For a given $n \in \mathbb{N}$, a predictor-target \textbf{base system} is the collection of $n+1$ finite random variables $\mathfrak{s} = (\bm{X}, T) = (X_1, ..., X_n, T)$, characterized by the joint pmf $p_{\mathfrak{s}} = p_{\bm{X},T}$.
    For any index set $I \subseteq [n]$, we may denote the subsystem $\mathfrak{s}_{I} = (\bm{X}_I, T) = (X_{i_1}, ... X_{i_m}, T)$. 
\end{definition}

The PID framework was put forward as a method of decomposing the information in such systems --- that is, decomposing the mutual information $I(T;\bm{X})$ into constitutive parts attributable to each $X_i$ and their combinations \cite{williams2010nonnegative}.
This attribution does not follow obviously from pairwise informations $I(T;X_1)$, since information is non-additive in general.

In the introduction, we introduced the bivariate PID in (\ref{eqs:bivariate_PID}a-c), in which $I(T;X_1, X_2)$ is decomposed into atoms $R+U_1+U_2+S$.
These atoms are typically given the following names and interpretations \cite{lizier2018information}
\begin{itemize}
    \item \textbf{Redundant} or \textbf{shared information ($\bm{R}$)}. The information regarding $T$ common to both predictors, and in some sense available from either.
    \item \textbf{Unique information ($\bm{U_i}$)}. The information re: $T$ available from only one of the variables, independently of the other.
    \item \textbf{Synergistic} or \textbf{complementary information ($\bm{S}$)}. The information re: $T$ that is only available when both variables are known.
\end{itemize}
PID extends for $n\geq 2$, though the number of distinct information atoms scales superexponentially.
Moreover, the interpretation of higher-order information atoms is not straight-forward.
As in the bivariate case above, in which there are 3 independent constraints for 4 unknowns, the PID equations for any $n$ present us with an underdetermined system.
Thus, a definition for one of the quantities must be provided.
Typically, this is done by defining a redundancy, synergy, or unique information function, and allowing the other atoms to follow.
Ours is a redundancy function $\Ift$, and thus the recursive derivation of the full PID lattice follows exactly as it did for $\Imin$ in \cite{williams2010nonnegative}.

We will now review the construction of the PID for an arbitrary number of predictors $n$.
From here on, \textbf{sources} are
subcollections of source variables, denoted  here $\bm{X}_I \subseteq \bm{X}$ for $I \subseteq [n]$.
PID can be summarized as a framework for measuring the redundancy and synergy among collections of information sources --- specifically, their information with respect to the designated target.
Note that we will sometimes treat both the full set $\bm{X}$ and sources $\bm{X}_{I} \subseteq \bm{X}$ as sets, and other times as tuples/vectors.

For any set $A$, allow $\mathcal{P}'(A) = \mathcal{P}(A) \setminus \{ \emptyset \}$ to denote the collection of all non-empty subsets, where $\mathcal{P}$ is the standard powerset.
The collection $\mathcal{P}'(\bm{X})$ forms a partially ordered set (`poset') under inclusion, denoted $(\mathcal{P}'(\bm{X}), \subseteq)$.
We let $I(T; \bm{X}_I)$ denote source-target mutual information.
For a fixed base system $\mathfrak{s}$, $I(T; \cdot)$ can be thought of as a functional on $\mathcal{P}'(\bm{X})$ that is monotonically increasing with respect to the partial order `$\subseteq$.'
PID assigns information value to collections of sources interpretable as redundant, synergistic, and/or contributions from the underlying variables.
In order to avoid trivial cases, only the antichains of $(\mathcal{P}'(\bm{X}), \subseteq)$ are considered.
An antichain of a poset is a collection of its elements such that no two are directly comparable.
Let $\mathcal{A}(\bm{X})$ denote the antichains of $(\mathcal{P}'(\bm{X}), \subseteq)$, defined formally by 
\begin{align}
    \nonumber
    \mathcal{A}(\bm{X}) \triangleq \bigg\{ & \alpha \in \mathcal{P}'(\mathcal{P}'(\bm{X})) \; \bigg| \\
    & \forall \bm{A},\bm{A}' \in \alpha, \; \bm{A} \neq \bm{A}' \Rightarrow \bm{A} \not\subseteq \bm{A} '\bigg\}
\end{align}
The idea here is that any $\alpha \in \mathcal{A}(\bm{X})$ is a collection of sources within which one source will never trivially dominate another by inclusion.
For a counter-example, consider that in any system $\mathfrak{s}$, $(X_1,X_2)$ will always provide no less information pointwise than $X_1$ alone, and so any sensible redundancy function will identify the redundancy of the non-antichain $\{ \{X_1, X_2\}, \{X_1\} \}$ with $I(T;X_1)$.
There are multiple partial orderings that may be assigned to the antichains themselves \cite{crampton2000two}.
In PID, the following is used \cite{williams2010nonnegative}:
\begin{align}
\label{eq:PID_lattice_order}
    \forall \alpha, \beta \in \mathcal{A}(\bm{X}), \quad \alpha \preceq \beta \Leftrightarrow \forall \bm{B} \in \beta, \exists \bm{A} \in \alpha, \bm{A} \subseteq \bm{B}
\end{align}
Although every antichain $\alpha$ is a set of sets of indexed predictors, we will typically exclude the outer brackets from notation where possible, i.e. $\{X_1\}\{X_2\}$ in place of $\{\{X_1\},\{X_2 \}\}$.
Moreover, we will let $\ind(\alpha)$ denote the collection of index sets for $\alpha$, i.e. $\ind(\alpha) = \{ I_j \}_j$ when $\alpha = \{\bm{X}_{I_j}\}_j$.

A \textbf{PID function} $\Pi$ assigns an information value to each $\alpha$ in $\mathcal{A}(\bm{X})$, while satisfying the \textbf{PID equations}:
\begin{equation}
    \forall \bm{X}_I \in \mathcal{P}'(\bm{X}), \quad
    I(T;\bm{X}_I) = \sum_{\alpha \preceq \{X_I\}} \Pi(\alpha)
\end{equation}
In the bivariate case where $\bm{X} = (X_1, X_2)$, these resolve to Eqs.~(\ref{eqs:bivariate_PID}a-c), which in our more general notation take the form:
\vspace{-12pt}
\begin{subequations}
\begin{align}
    I(T;X_1) & = \Pi(\{ X_1\} \{X_2\}) + \Pi(\{ X_1\}) \\
    I(T;X_2) &= \Pi(\{ X_1\} \{X_2\}) + \Pi(\{ X_2\}) \\
    \nonumber I(T;X_1, X_2) & = \Pi(\{ X_1\} \{X_2\}) + \Pi(\{ X_1\}) \\
    & \quad  + \Pi(\{ X_2\}) + \Pi(\{ X_1, X_2\})
\end{align}
\end{subequations}
We refer the reader to \cite[Figs.~2-3]{williams2010nonnegative} for visualizations of the PID lattice for $n \leq 3$.

The original \cite{williams2010nonnegative} and most common approach to defining $\Pi$ is by first defining a redundancy function $\Ired$ on $\mathcal{A}(\bm{X})$, which measures the redundant information among all the sources in $\alpha$.
The PID function $\Pi$ is then derived as the M\"{o}bius inverse of $\Ired$:
\begin{equation}
    \Pi(\alpha) = \Ired(\alpha) - \sum_{\beta \preceq \alpha} \Ired(\beta)
\end{equation}

It has often be considered desirable for a proposed PID to satisfy a set of properties first mentioned in \cite{williams2010nonnegative}, which have sometimes been referred to as the Williams-Beer (WB) axioms \cite{finn2018}.
\begin{enumerate}
    \item \textbf{Symmetry}. $\Ired(\bm{X}_{I_1}, ..., \bm{X}_{I_k})
    = \Ired(\bm{X}_{I_{\sigma(1)}}, ..., \bm{X}_{I_{\sigma(k)}})$ for any permutation $\sigma$.
    \item \textbf{Self-Redundancy.} $\Ired(\bm{X}_I) = I(T;\bm{X}_I)$
    \item \textbf{Monotonicity.} $\alpha \preceq \beta \Rightarrow \Ired(\alpha) \leq \Ired(\beta)$
\end{enumerate}
The symmetry axiom is usually trivial, and we ignore it in this work.
The monotonicity axiom is typically stated in its weaker form $\alpha \subseteq \beta \Rightarrow \Ired(\alpha) \leq \Ired(\beta)$,
since inclusion is a weaker ordering for antichains than that from (\ref{eq:PID_lattice_order}).
However, the stronger form here was demonstrated in the original PID \cite{williams2010nonnegative}, and follows naturally from our main theorem in the next section.

\section{A Measure of Redundant Information in Source-Fallible Systems}
\label{sec:Ift}

In Sec.~\ref{sec:example}, we gave an example of the kind of redundancy we aim to quantify.
In the fault tolerance literature, redundancy is provided as the number of nodes $n$ we would need in order to maintain system functionality if we allow for up to $\ell$ failures \cite{rullo2019redundancy}.
Here, we ask a slightly different question.
Suppose we have $n$ sources of information $X_1, ..., X_n$, and we want to know whether we may tolerate $\ell$ source failures while still having enough information regarding $T$.
That determination may be made by considering $I(T;\tilde{\bm{X}})$, for some $\tilde{\bm{X}} = (\tilde{X}_i)_i$ analagous to that in Table~\ref{table:leading_example}B, earlier.
We begin by formally defining such source-fallible instantiations for a system of interest.

\begin{definition}[Source-Fallible System]
\label{defn:sfs}
Let a base system $\baseSystem$ be given.
A source-fallible instantiation of this system will be given by the binary sensor failure variables $\bm{F} = (F_1, ..., F_n)$, $\mathcal{A}_{\bm{F}}=\{0,1\}^n$,
which are fully characterized by the conditional distribution $p_{\bm{F} | \bm{X}, T}$.
The event $F_i = 0$ is interpreted as a `failure' of the sensor for $X_i$.
From a given $\bm{F}$, we define the induced source-fallible system (SFS)
as the tuple of random variables $\sfs = ( \tilde{\bm{X}}, T) = ( \tilde{\bm{X}}(\bm{F},\bm{X}), T)$, where $T$ is the same target variable as in $\baseSystem$ and $\tilde{\bm{X}}$ is given by
\begin{align}
\label{eq:SFS}
\tilde{X}_i = \begin{cases}
    0, & F_i = 0\\
    X_i, & F_i = 1\\
\end{cases}
\end{align}
The collection of all SFS's associated to base system $\mathfrak{s}$ will be denoted $\mathfrak{F} = \mathfrak{F}(\mathfrak{s})$.
For any subsystem $\mathfrak{s}_I \subset \mathfrak{s}$, we may use the short-hand $\mathfrak{F}_I = \mathfrak{F}(\mathfrak{s}_I)$, which is similarly  the set of subsetted sensor-fallible vectors $\tilde{\bm{X}}_I$, satisfying (\ref{eq:SFS}) for each $i \in I$.
\end{definition}

The idea here is that our notion of redundancy deals with censored copies of our predictors, where for fixed $\baseSystem$ there's a one-to-one correspondence between the failure distribution $\bm{F}$ and the censored source vector $\bm{\tilde{X}}$.
Clearly, the least informative $\tilde{\bm{X}}$ in $\SFS$ will be given by $\bm{F} \equiv \bm{0}$, i.e. when all sources fail all the time. Then $\tilde{\bm{X}} \equiv \bm{0}$.
The SFS's we're interested in, though, are those that have limited failures while still providing one of the sources of information, as in Table~\ref{table:leading_example}B.
We now have the notation to make this condition precise in the general case.
We will allow for a fairly unconstrained distribution of source failures $\bm{F}$, i.e. we need not assume $\bm{F}$ has any particular independence with respect to $T$ and/or $\bm{X}$.

\begin{definition}
    \label{defn:sfsrs}
    Let a system $\baseSystem$ be given. We say that an SFS $\sfs(\bm{F})$ \textbf{redundantly satisfies} a source antichain $\alpha \in \mathcal{A}(\bm{X})$ if
    \begin{equation}
    \label{eq:defn.redSat}
        P \left( \bigvee_{I \in \ind(\alpha)} \bm{F}_I = \bm{1}_{|I|} \right) = 1
    \end{equation}
    where $\bm{1}_{|I|}$ is the length $|I|$ vector of ones.
    The collection of all $\sfs \in \SFS(\baseSystem)$ redundantly satisfying $\alpha$ are denoted $\SFSRS(\alpha).$
\end{definition}
Intuitively, (\ref{eq:defn.redSat}) means that always at least one source $X_I \in \alpha$ is fully available from observing $\tilde{\bm{X}}$.
\textbf{We posit that the information redundant among the sources in an antichain should be available from any SFS that redundantly satisfies that antichain.}
From this postulate, the definition that should follow for redundant information becomes clear.
We merely minimize over all such SFS's redundantly satisfying the given antichain.

\begin{definition}
    \label{defn:Ift}
    Let a base system $\baseSystem$ be given. For any source antichain $\alpha \in \mathcal{A}(\bm{X})$, the $\Ift$ redundancy function is defined by
    \begin{align}
    \label{eq:defn.Ift}
        \Ift(\alpha) = \min_{(\tilde{\bm{X}},T) \in \SFSRS(\alpha)} I(T; \tilde{\bm{X}})
    \end{align}
\end{definition}

There is one aspect of this definition that may seem a little strange.
For any antichain, we are minimizing over SFS's constructed from the full system $(X_1, ..., X_n, T)$.
For the redundancy of $\alpha = \{ X_1\} \{X_2\}$, surely we should only be concerned with $(X_1, X_2, T)$.
In fact, the minimum in (\ref{eq:defn.Ift}) for such a case can be achieved for an SFS where $F_i \equiv 0$ for $i>2$,.
This is indistinguishable from considering only the system $(X_1,X_2, T)$ from the start.
Although we retain Def.~\ref{defn:Ift} as our preferred formulation, in the following two propositions, we make precise the equivalence of our definition to this reduced form.

\begin{proposition}
    \label{prop:embedding}
    For any base system $\baseSystem$ and subsystem $\baseSystem_I \subset [n]$, there is an injective embedding  $g_I: \SFS_I \hookrightarrow \SFS$,
    given by
    \begin{gather}
        \nonumber g_I: \sfs_I = (\tilde{\bm{X}}_I,T) \mapsto g_I(\sfs_{I}) = (\tilde{\bm{Y}},T) \\
        \label{eq:embedding_map}
        \tilde Y_i = \begin{cases}
            \tilde X_i, & i \in I \\
            0, & i \in [n] \setminus I
        \end{cases}
    \end{gather}
    This embedding is information preserving, in the sense that
    \begin{equation}
    \label{eq:embedding.info_equiv}
        I(T;\tilde{\bm{X}}_I) = I(T;\tilde{\bm{Y}})
    \end{equation}
\end{proposition}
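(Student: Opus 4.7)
The plan is to verify three separate claims: (i) the map $g_I$ is well-defined as an SFS in $\SFS(\baseSystem)$, (ii) it is injective, and (iii) it preserves source-target mutual information as in (\ref{eq:embedding.info_equiv}). Each is essentially a bookkeeping exercise once the right structural observation is made, namely that an SFS is characterized up to equivalence by the joint distribution of $(\tilde{\bm{X}},T)$ induced by Def.~\ref{defn:sfs}, and that padding $\tilde{\bm{X}}_I$ with deterministic zeros in the remaining coordinates cannot change any information-theoretic quantity involving $T$.

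For well-definedness, I would start from an arbitrary $\sfs_I = (\tilde{\bm{X}}_I, T) \in \SFS_I$ with associated failure variable $\bm{F}_I$ characterized by $p_{\bm{F}_I | \bm{X}_I, T}$, and extend it to a failure variable $\bm{F}'$ on $\baseSystem$ by $p_{\bm{F}' | \bm{X}, T}(\bm{f}' | \bm{x}, t) = p_{\bm{F}_I | \bm{X}_I, T}(\bm{f}'_I | \bm{x}_I, t) \cdot \mathbbm{1}[\bm{f}'_{[n]\setminus I} = \bm{0}]$. This is plainly a valid conditional distribution, and (\ref{eq:SFS}) applied to $\bm{F}'$ produces precisely the padded tuple $\tilde{\bm{Y}}$ of (\ref{eq:embedding_map}); so $g_I(\sfs_I) \in \SFS(\baseSystem)$. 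Injectivity then follows immediately: if $g_I(\sfs_I) = g_I(\sfs'_I)$, then $(\tilde{\bm{Y}},T)$ and $(\tilde{\bm{Y}}',T)$ share the same joint distribution, and since $\tilde{Y}_i = \tilde{X}_i$ for $i \in I$ while the remaining coordinates vanish deterministically in both, projecting to the coordinates in $I$ recovers the equality of $(\tilde{\bm{X}}_I, T)$ and $(\tilde{\bm{X}}'_I, T)$ in distribution, hence $\sfs_I = \sfs'_I$.

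The information-preservation equation (\ref{eq:embedding.info_equiv}) is then a one-line chain-rule computation: because $\tilde{\bm{Y}}_{[n]\setminus I} \equiv \bm{0}$ almost surely, it is independent of $T$ and of $\tilde{\bm{Y}}_I$, so $H(\tilde{\bm{Y}}_{[n]\setminus I} \mid \tilde{\bm{Y}}_I) = H(\tilde{\bm{Y}}_{[n]\setminus I} \mid \tilde{\bm{Y}}_I, T) = 0$ and hence $I(T; \tilde{\bm{Y}}) = I(T; \tilde{\bm{Y}}_I) + I(T; \tilde{\bm{Y}}_{[n]\setminus I} \mid \tilde{\bm{Y}}_I) = I(T; \tilde{\bm{X}}_I)$.

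The only mildly subtle step is the first: one must be careful that the conditional $p_{\bm{F}_I | \bm{X}_I, T}$, originally specified only with respect to the sub-tuple $\bm{X}_I$, can be promoted to a conditional on the full $\bm{X}$ without disturbing the marginal behavior of $(\tilde{\bm{X}}_I, T)$. The product form above accomplishes this by making $\bm{F}'$ conditionally independent of $\bm{X}_{[n]\setminus I}$ given $(\bm{X}_I, T)$. After that, injectivity is a pure projection argument and (\ref{eq:embedding.info_equiv}) is automatic from the chain rule, so there is no real analytic obstacle.
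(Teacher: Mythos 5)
Your proof is correct and follows essentially the same route as the paper: injectivity via projection, and the information-preservation identity via the chain rule $I(T;\tilde{\bm{Y}}) = I(T;\tilde{\bm{Y}}_I) + I(T;\tilde{\bm{Y}}_{[n]\setminus I}\mid\tilde{\bm{Y}}_I)$ together with the observation that the padded coordinates are deterministically zero. Your explicit construction of the extended failure distribution $p_{\bm{F}'\mid\bm{X},T}$ is a welcome detail that the paper's proof leaves implicit, but it does not change the argument.
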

\begin{proof}
The injectivity of $g_I$ is straightforward from its definition. For (\ref{eq:embedding.info_equiv}), we observe that $I(T;\bm X_I) = I(T;\bm Y_I)$ and $I(T;\bm Y_{\bar{I}} | \bm Y_I) \leq H(\bm Y_{\bar{I}} |\bm Y_I) = 0$ (as $\bm Y_{\bar{I}}$ is constant).
Thus, (\ref{eq:embedding.info_equiv}) follows from the chain rule $I(T;\bm{Y}) = I(T;\bm{Y}_I) + I(T;\bm{Y}_{\bar{I}} | \bm{Y}_I)$.
\end{proof}

We may thus identify $\SFS_I$ with its image $g_I(\SFS_I)$ as a subset of $\SFS$.
The following proposition shows that we may compute $\Ift(\alpha)$ within the reduced space $\SFS_I$ for the smallest $\bm{X}_I$ capturing all sources in $\alpha$.

\begin{proposition}
    For any $\alpha \in \mathcal{A}(\bm{X})$, we let $\baseSystem_{\alpha} = \baseSystem_{\cup \ind(\alpha)}$ be the subsystem formed by the union of the sources in $\alpha$.
    The sensor-fallible realizations of this subsystem are similarly denoted $\SFS_{\alpha}$.
    Using the embedding $\SFS_{\alpha} \hookrightarrow \SFS$ from Prop.~\ref{prop:embedding},
    we denote the intersection of $\SFS_{\alpha}$ with antichains redundantly satisfying $\alpha$ by:
    \begin{equation}
        \SFSRS_{\alpha}(\alpha) = \SFSRS(\alpha) \cap \SFS_{\alpha}.
    \end{equation}
    Then the $\Ift$ function from Def.~\ref{defn:Ift} is equivalently defined by:
    \begin{align}
    \label{eq:defn.Ift.subsystem}
        \Ift(\alpha) = \min_{(\tilde{\bm{X}},T) \in \SFSRS_{\alpha}(\alpha)} I(T; \tilde{\bm{X}}).
    \end{align}
\end{proposition}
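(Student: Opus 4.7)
The plan is to prove equality of the two minima by sandwiching. One direction is immediate: using the embedding $g_{\cup \ind(\alpha)}$ from Prop.~\ref{prop:embedding}, the set $\SFSRS_{\alpha}(\alpha) = \SFSRS(\alpha) \cap \SFS_{\alpha}$ is (identified with) a subset of $\SFSRS(\alpha)$, and by (\ref{eq:embedding.info_equiv}) the embedding preserves $I(T;\cdot)$. Hence $\min_{\sfs \in \SFSRS_{\alpha}(\alpha)} I(T;\tilde{\bm X}) \geq \min_{\sfs \in \SFSRS(\alpha)} I(T;\tilde{\bm X}) = \Ift(\alpha)$.

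The substantive direction is the reverse inequality, for which I would use a projection argument. Let $J = \cup \ind(\alpha) \subseteq [n]$. Given any $\sfs = (\tilde{\bm X}, T) \in \SFSRS(\alpha)$ induced by failure variables $\bm F$, I construct a new SFS $\sfs'$ by defining failure variables $\bm F'$ with $F'_i = F_i$ for $i \in J$ and $F'_i \equiv 0$ for $i \in [n]\setminus J$. The induced censored sources are $\tilde X'_i = \tilde X_i$ for $i \in J$ and $\tilde X'_i \equiv 0$ for $i \notin J$. Since Def.~\ref{defn:sfs} places no constraints on $p_{\bm F \mid \bm X, T}$ beyond being a valid conditional, $\sfs' \in \SFS(\baseSystem)$. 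Because the redundant-satisfaction event in (\ref{eq:defn.redSat}) involves only coordinates indexed by $J$, which are unchanged, $\sfs'$ still redundantly satisfies $\alpha$, so $\sfs' \in \SFSRS(\alpha)$. Furthermore, $\sfs'$ lies in the image of the embedding $g_J$ from Prop.~\ref{prop:embedding}, so $\sfs' \in \SFSRS_{\alpha}(\alpha)$.

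To close the argument I need $I(T;\tilde{\bm X}') \leq I(T;\tilde{\bm X})$. I would either invoke the data processing inequality (since $\tilde{\bm X}'$ is a deterministic function of $\tilde{\bm X}$ obtained by zeroing the $\bar J$-coordinates), or replay the chain-rule computation from Prop.~\ref{prop:embedding}: $I(T;\tilde{\bm X}') = I(T;\tilde{\bm X}'_J) = I(T;\tilde{\bm X}_J) \leq I(T;\tilde{\bm X})$, where the first equality uses that $\tilde{\bm X}'_{\bar J}$ is deterministically zero. Thus every competitor in $\SFSRS(\alpha)$ admits a competitor in $\SFSRS_{\alpha}(\alpha)$ with no greater information, giving the reverse inequality and hence (\ref{eq:defn.Ift.subsystem}).

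I do not anticipate any serious obstacle; the only subtlety is verifying that $\sfs'$ is a bona fide element of $\SFS(\baseSystem)$ and of $\SFSRS_{\alpha}(\alpha)$ under the embedding identification, both of which follow directly from the definitions once the projection construction is in place.
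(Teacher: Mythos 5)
Your proposal is correct and follows essentially the same route as the paper: given any competitor in $\SFSRS(\alpha)$, zero out the coordinates outside $\cup \ind(\alpha)$ to obtain a member of $\SFSRS_{\alpha}(\alpha)$ with no greater mutual information, and combine this with the trivial containment $\SFSRS_{\alpha}(\alpha) \subseteq \SFSRS(\alpha)$. Your write-up is somewhat more explicit than the paper's (which leaves the easy containment direction implicit), but the construction and the chain-rule bound $I(T;\tilde{\bm X}) \geq I(T;\tilde{\bm X}_{J}) = I(T;\tilde{\bm X}')$ are identical.
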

\begin{proof}
    Let $(\tilde{\bm{X}},T) \in \SFSRS(\alpha)$ be given.
    Set $I^{\cup} = \cup \ind(\alpha)$.
    Define $(\tilde{\bm{Y}},T)$ by $\tilde{Y}_i \equiv \tilde{X}_i$ for $i \in I^{\cup}$, and $\tilde{Y}_i \equiv 0$ otherwise.
    Clearly, $(\tilde{\bm{Y}},T) \in \SFSRS_{\alpha}(\alpha)$, satisfying (\ref{eq:defn.redSat}) for $\alpha$.
    Moreover, $I(T;\tilde{\bm{X}}) \geq I(T;\tilde{\bm{X}}_{I^{\cup}}) = I(T;\tilde{\bm{Y}})$.
    Thus, since we can find such a $(\tilde{\bm{Y}},T) \in \SFSRS_{\alpha}(\alpha)$ for any $(\tilde{\bm{X}},T) \in \SFSRS(\alpha)$, it follows that the minimum in (\ref{eq:defn.Ift}) is equal to that in (\ref{eq:defn.Ift.subsystem}).
\end{proof}

From the preceding two propositions, the demonstration of the self-redundancy PID axiom for $\Ift$ reduces to one line.
\begin{proposition}[Self-Redundancy Axiom]
    \label{prop:SRaxiom}
    The $\Ift$ function satisfies the self-redundancy axiom from Sec.~\ref{sec:PID} --- that is, it is an extension of mutual information in the sense that
    \begin{equation}
        \label{eq:prop.SR}
        \Ift(\{ X_I \}) = I(T;X_I)
    \end{equation}
    for all $\bm{X}_I$ 
\end{proposition}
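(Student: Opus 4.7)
The plan is to invoke the reduced-subsystem characterization from the preceding proposition and then observe that the feasible set $\SFSRS_{\alpha}(\alpha)$ collapses to a single SFS (up to almost-sure equivalence) when $\alpha = \{\bm{X}_I\}$ is a singleton antichain.

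Specifically, I would first apply Eq.~(\ref{eq:defn.Ift.subsystem}) with $\alpha = \{\bm{X}_I\}$, so that $\baseSystem_{\alpha} = \baseSystem_{I}$ and the minimization runs only over SFS's built from the subsystem $(\bm{X}_I, T)$. Next, I would unpack the redundant-satisfaction condition (\ref{eq:defn.redSat}): since $\ind(\alpha) = \{I\}$ is a single index set, the disjunction collapses to the single event $\bm{F}_I = \bm{1}_{|I|}$, so the condition becomes $P(\bm{F}_I = \bm{1}_{|I|}) = 1$. By the defining formula (\ref{eq:SFS}), this forces $\tilde{X}_i = X_i$ almost surely for every $i \in I$, so that $\tilde{\bm{X}}_I = \bm{X}_I$ almost surely.

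Therefore every $(\tilde{\bm{X}}_I, T) \in \SFSRS_{\alpha}(\alpha)$ has the same joint distribution as $(\bm{X}_I, T)$, yielding $I(T; \tilde{\bm{X}}_I) = I(T; \bm{X}_I)$. The minimum in (\ref{eq:defn.Ift.subsystem}) is thus trivially $I(T; \bm{X}_I)$, establishing (\ref{eq:prop.SR}).

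There is no real obstacle here; the content of the proposition is entirely carried by Prop.~\ref{prop:embedding} and the proposition immediately preceding it, which together reduce a singleton antichain to the unique fault-free SFS on its support. The one-line proof the author advertises is essentially just the observation that redundant satisfaction of a singleton antichain leaves no freedom in $\bm{F}_I$.
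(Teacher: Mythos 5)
Your proof is correct and follows essentially the same route as the paper's: both invoke the reduced-form characterization in Eq.~(\ref{eq:defn.Ift.subsystem}) and observe that redundant satisfaction of the singleton antichain $\{\bm{X}_I\}$ forces $P(\bm{F}_I = \bm{1}_{|I|})=1$, so that $\SFSRS_{\alpha}(\alpha)$ contains only (copies of) $(\bm{X}_I,T)$ itself. Your version merely spells out the almost-sure identification $\tilde{\bm{X}}_I = \bm{X}_I$ that the paper's one-line proof leaves implicit.
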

\begin{proof}
    Since $I = \cup \ind(\{ X_I \})$, $\SFSRS_{\alpha}(\alpha)$ is a singleton.
    The only element is $(X_I,T)$, and so (\ref{eq:prop.SR}) follows from (\ref{eq:defn.Ift.subsystem}).
\end{proof}

Our main result in this section elucidates the structure of the $\Ift$ redundancy function, including its satisfaction of the monotonicity PID axiom, by establishing a close correspondence between source antichains $\alpha \in \mathcal{A}(\bm{X})$, on the one hand, and these collections of source-fallible systems $\SFSRS(\alpha) \subset \SFS(\baseSystem)$ that we have been examining.
Let $\mathcal{F}(\baseSystem) = \{ \SFSRS(\alpha) \}_{\alpha \in \mathcal{A}(\bm{X})}$.
Insofar as each family $\SFSRS(\alpha)$ is a subset of $\SFS$, it follows that we can consider $\mathcal{F}(\baseSystem)$ as a finite poset under inclusion, i.e. $(\mathcal{F}(\baseSystem), \subseteq)$.
As we will now demonstrate, this poset is the mirror image of the PID lattice $(\mathcal{A}(\bm{X}), \preceq)$.

\begin{thm}
For any base system $\baseSystem = (\bm{X}, T)$, the lattices $(\mathcal{A}(\bm{X}), \preceq)$ and $(\mathcal{F}(\baseSystem), \subseteq)$ are anti-isomorphic.
Namely, the map $\alpha \mapsto \SFSRS(\alpha)$ is an order-reversing bijection, in the sense that
\begin{equation}
    \label{eq:poset_antiisomorphism}
    \alpha \preceq \beta \Leftrightarrow \SFSRS(\beta) \subseteq \SFSRS(\alpha) 
\end{equation}
\end{thm}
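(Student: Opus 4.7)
I would prove the two implications of (\ref{eq:poset_antiisomorphism}) separately, and then derive bijectivity from them using antisymmetry of the two partial orders. Throughout, I identify each $\bm{A} \in \alpha$ with its index set $A = \ind(\{\bm A\}) \subseteq [n]$, so that the condition $\bm{A} \subseteq \bm{B}$ becomes $A \subseteq B$ and the event in (\ref{eq:defn.redSat}) becomes $\bigvee_{A \in \ind(\alpha)} \bm F_A = \bm 1_{|A|}$.

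\textbf{Forward direction.} Assume $\alpha \preceq \beta$ and let $\sfs = (\tilde{\bm X}, T) \in \SFSRS(\beta)$. Fix any realization of $\bm F$ lying in the probability-1 event of (\ref{eq:defn.redSat}) for $\beta$; then some $B \in \ind(\beta)$ has $\bm F_B = \bm 1$. By definition of $\preceq$, there exists $A \in \ind(\alpha)$ with $A \subseteq B$, whence $\bm F_A = \bm 1$ as well. Taking a union bound over the (finitely many) realizations of $\bm F$ in this event shows the analogous probability-1 event holds for $\alpha$, so $\sfs \in \SFSRS(\alpha)$.

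\textbf{Reverse direction (contrapositive).} Assume $\alpha \not\preceq \beta$; I construct an explicit $\sfs \in \SFSRS(\beta) \setminus \SFSRS(\alpha)$. By negating (\ref{eq:PID_lattice_order}), there is some $B \in \ind(\beta)$ such that no $A \in \ind(\alpha)$ satisfies $A \subseteq B$. Define a deterministic failure vector by
\begin{equation}
    F_i = \begin{cases} 1, & i \in B \\ 0, & i \in [n] \setminus B, \end{cases}
\end{equation}
independent of $(\bm X, T)$, and let $\tilde{\bm X}$ be the induced SFS via (\ref{eq:SFS}). Then $\bm F_B \equiv \bm 1_{|B|}$, so the event in (\ref{eq:defn.redSat}) holds with probability 1 for $\beta$, giving $\sfs \in \SFSRS(\beta)$. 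On the other hand, for every $A \in \ind(\alpha)$ there exists $i \in A \setminus B$, so $F_i = 0$ deterministically; hence $\bm F_A \neq \bm 1_{|A|}$ almost surely for every $A \in \ind(\alpha)$, and $\sfs \notin \SFSRS(\alpha)$. This is the main technical step and the only place where the \emph{anti}chain structure plays a role implicitly, through the fact that $\ind(\alpha)$ is a well-defined set of index sets to iterate over.

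\textbf{Bijection.} Surjectivity of $\alpha \mapsto \SFSRS(\alpha)$ onto $\mathcal F(\baseSystem)$ is immediate from the definition $\mathcal F(\baseSystem) = \{\SFSRS(\alpha)\}_{\alpha \in \mathcal A(\bm X)}$. For injectivity, suppose $\SFSRS(\alpha) = \SFSRS(\beta)$. Then both $\SFSRS(\alpha) \subseteq \SFSRS(\beta)$ and $\SFSRS(\beta) \subseteq \SFSRS(\alpha)$ hold, so by the just-established equivalence (\ref{eq:poset_antiisomorphism}) we get $\beta \preceq \alpha$ and $\alpha \preceq \beta$; antisymmetry of $\preceq$ on $\mathcal A(\bm X)$ then forces $\alpha = \beta$. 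The expected obstacle was the reverse direction, since one must exhibit an SFS separating $\SFSRS(\beta)$ from $\SFSRS(\alpha)$; the construction above shows that a deterministic, $(\bm X,T)$-independent failure pattern localized on $B$ suffices, avoiding any delicate probabilistic argument.
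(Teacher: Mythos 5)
Your proof is correct and follows essentially the same route as the paper's: the forward direction transfers the almost-sure availability of some $B \in \ind(\beta)$ down to an $A \subseteq B$ in $\ind(\alpha)$, and the reverse direction uses the identical deterministic failure pattern supported on a witness $B$ that contains no member of $\ind(\alpha)$. Your explicit injectivity/surjectivity argument via antisymmetry is a small completeness addition the paper leaves implicit, but it does not change the approach.
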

\begin{proof}
    Suppose $\alpha \preceq \beta$.
    Let $(\tilde{\bm{X}},T) \in \SFSRS(\beta)$ be given, induced from $\bm{F}$.
    We will show that $(\tilde{\bm{X}},T) \in \SFSRS(\alpha)$.
    Let $\bm{f}$ be a given realization of $\bm{F}$.
    We then know by Def.~\ref{defn:sfsrs} that for some index set $I \in \ind(\beta)$, for source $\bm{X}_I \in \beta$, $\bm{f}_I = \bm{1}$.
    In other words, for the event $\bm{F}=\bm{f}$, the source $\bm{X}_I$ is available.
    By (\ref{eq:PID_lattice_order}), we know that there is some $\bm{X}_J \in \alpha$ such that $\bm{X}_J \subset \bm{X}_I$, i.e. $J \subseteq I$ for some $J \in \ind(\alpha)$.
    Therefore, $\bm{f}_J = \bm{1}$ and $X_J$ is available. Since our choice of $\bm{f}$ was arbitrary, it follows that (\ref{eq:defn.redSat}) holds for $\alpha$ as well as $\beta$, and so $(\tilde{\bm{X}},T)$ redundantly satisfies $\alpha$.

    Now, suppose instead $\alpha \not \preceq \beta$.
    Then there is some $X_I \in \beta$ such that for all $X_J \in \alpha$, $X_J \not\subseteq X_I$, i.e $J \not\subseteq I$.
    This means for each $J \in \ind(\alpha)$, there is some $j^\star \in J \cap \bar{I}$.
    Let $\bm{F}$ be the constant binary vector where $F_I = \bm{1}$ and $F_{\bar{I}} = \bm{0}$.
    This induces the SFS $(\tilde{\bm{X}}, T)$ which redundantly satisfies $\beta$, since $P(\bm{F}_I) = 1$.
    However, $P(\bm{F}_{J} = \bm{1}) = 0$ for every $J \in \ind(\alpha)$, since $F_{j^\star} \equiv 0$.
    Thus, $(\tilde{\bm{X}}, T) \in \SFSRS(\beta) \setminus \SFSRS(\alpha)$.
\end{proof}

An immediate consequence of this result is the monotonicity PID axiom.
As one adds sources to an antichain, the set of fallible instantiations of the system grows: there are more ways for things to go wrong.
Conversely, by removing sources or concatenating them, we reduce that range of possibilities.

\begin{corollary}[Monotonicity Axiom]
The $\Ift$ function is (monotone) increasing on the PID lattice $(\mathcal{A}(\bm{X}), \preceq)$, i.e.
\begin{equation}
    \label{eq:cor.monotonicity}
    \alpha \preceq \beta \Rightarrow \Ift(\alpha) \leq \Ift(\beta)
\end{equation}
\end{corollary}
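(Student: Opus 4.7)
The plan is to derive the monotonicity of $\Ift$ as a nearly immediate consequence of the preceding anti-isomorphism theorem, using only one additional observation about minima over nested sets. The reasoning breaks into two short steps: a translation step from the PID lattice to the source-fallible poset via the theorem, and a monotonicity-of-minimum step applied to the definition in (\ref{eq:defn.Ift}).

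First, I would assume $\alpha \preceq \beta$ and invoke the theorem directly to conclude $\SFSRS(\beta) \subseteq \SFSRS(\alpha)$. This is the whole content of the lattice correspondence applied in one direction; no further work is needed to relate the antichains to their associated families of source-fallible systems. Intuitively, this captures the statement in the corollary's preamble: adding or refining sources in the antichain (moving up the PID lattice) enlarges the collection of failure patterns that still leave some source in $\alpha$ fully available, so $\SFSRS(\alpha)$ is the larger set.

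Second, I would apply the elementary fact that the infimum of a real-valued functional over a set is nonincreasing in that set: if $\SFSRS(\beta) \subseteq \SFSRS(\alpha)$, then
\begin{equation}
    \min_{\sfs \in \SFSRS(\alpha)} I(T;\tilde{\bm{X}}) \; \leq \; \min_{\sfs \in \SFSRS(\beta)} I(T;\tilde{\bm{X}}).
\end{equation}
By Definition~\ref{defn:Ift}, the left-hand side is $\Ift(\alpha)$ and the right-hand side is $\Ift(\beta)$, giving (\ref{eq:cor.monotonicity}). One should briefly note that both minima are attained (or at least well-defined infima) because $\SFS(\baseSystem)$ is parameterized by a compact family of conditional distributions $p_{\bm F \mid \bm X,T}$ and the mutual information is continuous in that parameter, so the argument does not depend on any existence subtlety.

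There is no real obstacle here; the entire difficulty was absorbed into the anti-isomorphism theorem, and the corollary is a two-line consequence. If anything, the only minor care is stylistic: making sure the reader understands that the order reversal is what makes the direction of the inequality come out right, i.e.\ that ``larger antichain'' corresponds to ``smaller family of SFS's redundantly satisfying it,'' and hence to a larger minimum. I would flag this in a single sentence of intuition before the formal two-step derivation, in keeping with the narrative framing the author already provided.
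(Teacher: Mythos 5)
Your proof is correct and follows essentially the same route as the paper's: apply the anti-isomorphism theorem to convert $\alpha \preceq \beta$ into the reverse inclusion $\SFSRS(\beta) \subseteq \SFSRS(\alpha)$, then use the elementary fact that minimizing over a larger set yields a smaller value. The brief aside on attainment of the minima is a reasonable extra precaution but does not change the argument.
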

\begin{proof}
    By (\ref{eq:poset_antiisomorphism}), we have 
    \begin{align}
    \nonumber
        \alpha \preceq \beta & \Leftrightarrow \quad \SFSRS(\alpha) \supseteq \SFSRS(\beta ; \baseSystem) \\
        & \Rightarrow \quad \min_{\sfs \in \SFSRS(\alpha ; \baseSystem)} f(\sfs) \leq \min_{\sfs \in \SFSRS(\beta ; \baseSystem)} f(\sfs)
    \end{align}
    for any function $f$.
    (\ref{eq:cor.monotonicity}) follows from (\ref{eq:defn.Ift}).
\end{proof}

To conclude, let us revisit the scenario we introduced at the start of this section.
Given $n$ sources of information (predictor variables) for a system regarding its target $T$, we asked how that system might tolerate up to $\ell$ failures at a time.
Let us further add that these failures may be arbitrarily (even adversarially) distributed.
Let $\mathcal{P}_{n - \ell}'(\bm{X})$ be the set of all sources $\bm{X}_I$ with $|I| = n - \ell$.
There will be ${n \choose \ell}$ of these sources.
Then, letting $\alpha_{\ell} = \mathcal{P}_{n - \ell}'(\bm{X})$, we can answer that the expected amount of information we can guarantee regarding $T$ is given by $\Ift(\alpha)$, as defined in this section.
Moreover, as a special case of the monotonicity axiom demonstrated above, we have that
\begin{align}
    \nonumber 0 &= \Ift(\alpha_n) \leq \Ift(\alpha_{n-1}) \leq ... \\
    &\leq \Ift(\alpha_{1}) \leq \Ift(\alpha_{0}) = I(T; \bm{X})
\end{align}
where we extended $\Ift$ naturally to take $\alpha_n = \emptyset$ as an argument.





\bibliographystyle{ieeetr}
\bibliography{main}

\end{document}